\documentclass[a4paper, 12pt]{article}
\usepackage{bm}
\usepackage{amsmath}

\usepackage{natbib}

\usepackage{booktabs}
\usepackage{threeparttable}
\usepackage[dvipdfmx]{graphicx}
\usepackage{xcolor}

\usepackage[hang,small,bf]{caption}
\usepackage[subrefformat=parens]{subcaption}
\captionsetup{compatibility=false}

\usepackage{amsthm}
\theoremstyle{plain}
\newtheorem{prop}{Proposition}
\theoremstyle{remark}

\begin{document}

\def\spacingset#1{\renewcommand{\baselinestretch}%
{#1}\small\normalsize} \spacingset{1}

\title{A regression approach to the two-dataset problem}
\author{
Steven N. MacEachern \\ Department of Statistics, The Ohio State University
\and
Koji Miyawaki \\ School of Economics, Kwansei Gakuin University
}
\date{\today}
\maketitle

\begin{abstract}
This paper considers the two-dataset problem, where data are collected from two potentially different populations sharing common aspects.
This problem arises when data are collected by two different types of researchers or from two different sources.
We may reach invalid conclusions without using knowledge about the data collection process.
To address this problem, this paper develops statistical regression models focusing on the difference in measurement and proposes two prediction errors that help to evaluate the underlying data collection process.
As a consequence, it is possible to discuss the heterogeneity/similarity of the set of predictors in terms of prediction.
Two real datasets are selected to illustrate our method.

\textit{Keywords: } Data collection process; Random coefficients model; Bayesian model averaging.
\end{abstract}

\newpage
\spacingset{1.5} 


\section{Introduction}


The data collection process is an important component of statistical analysis (see, e.g., \citet{cox-snell-81}), where data are often assumed to be independently collected from an identical population with appropriate precision for the analysis.
In the literature on causal inference, this assumption is required for so-called internal validity (see, e.g., \citet{shadish-etal-02}).
However, in practice, we encounter cases where this assumption does not hold because of the data collection process.
This paper develops statistical regression models that reflect the underlying data collection process focusing on the difference in measurement and proposes two prediction errors that can be used to evaluate the process as well as the heterogeneity/similarity in the set of predictors.
These two errors are of Mallows' $C_{p}$-type (see \citet{mallows-73, mallows-95}) and closely related to the idea of the final prediction error (see \citet{akaike-69, akaike-70}), which is a precedent of the well-known Akaike information criterion.

Two typical data collection processes, neither of which satisfies the above assumption, are considered in this paper.
The first situation concerns data quality.
Except in cases of small datasets, data are typically collected by several researchers and are combined into one large dataset for analysis.
Thus, it is reasonable to consider the heterogeneity in the degree of data collection skill among researchers.
Even if the skill is homogeneous, the data may differ in quality due to the data collection environment, as we will see in the geyser dataset (Figure \ref{fig:The Old Faithful Geyser data.}).
Both scenarios yield a dataset that is a mix of low-quality and high-quality data.
In this case, internal consistency in quality (or data compatibility) is an issue.
If the data are not consistent in quality, using only the precise data may lead to a valid conclusion, but the conclusion may be inaccurate because it ignores information from the discarded data.

The second situation is related to the data source.
To ensure a sufficient number of observations, data are collected from multiple sources, which may violate the assumption that the data are from the same population.
For example, medical data are often collected at several hospitals, but different hospitals are likely to have different treatments or patients.
Thus, the analysis of data collected in this way may be invalid because the data are from different populations.
However, as in the diabetes dataset discussed in Section \ref{sec:Motivating data}, such different populations also share some common aspects.
In practice, data are obtained by a mix of these two situations mentioned above.

To address above two situations explicitly, we link the process, focusing on the difference in measurement, to the linear regression model, and derive two prediction errors, which are estimated by the Bayesian approach.
The first evaluates the internal data compatibility.
Furthermore, we investigate the heterogeneity/similarity in the set of predictors with the second error, which is addressed within the framework of Bayesian model averaging (see, e.g., \citet{raftery-etal-97, brown-etal-02, steel-19, miyawaki-maceachern-19}).

Situations similar to the one considered in this paper are addressed in the classical estimation methods, including the generalized $T^{2}$-statistics applied in the two-sample problem (see \citet{hotelling-31}), the canonical correlation analysis applied in the test of independence (see \citet{hotelling-36}), and the test of mean vectors of different distributions using the multivariate analysis of variance (see, e.g., \citet{anderson-03}).
More recently, statistical inferences for big data are demanded, and the distributed computing plays an important role in the literature (see \citet{gao-etal-21} for its review), the technology of which is related to the problem examined in this paper.

The rest of this paper is organized as follows.
After two real datasets used in this paper are introduced in Section \ref{sec:Motivating data}, Section \ref{sec:Two-dataset problem} describes a general two-dataset problem and derives two prediction errors, focusing on the difference in measurement.
Their Bayesian estimation is explained in Section \ref{sec:Estimate the prediction error}.
The sensitivity of our estimation approach to the prior distribution is also discussed in this section.
Section \ref{sec:Illustrative applications} presents illustrative applications of our method using two real datasets.
Finally, Section \ref{sec:Conclusion} concludes the paper.


\section{Two motivating datasets}
\label{sec:Motivating data}


The geyser data used by \citet{azzalini-bowman-90} are shown in Figure \ref{fig:The Old Faithful Geyser data.}.
\begin{figure}[ht]
\centering
\includegraphics[width=10cm, clip, keepaspectratio]{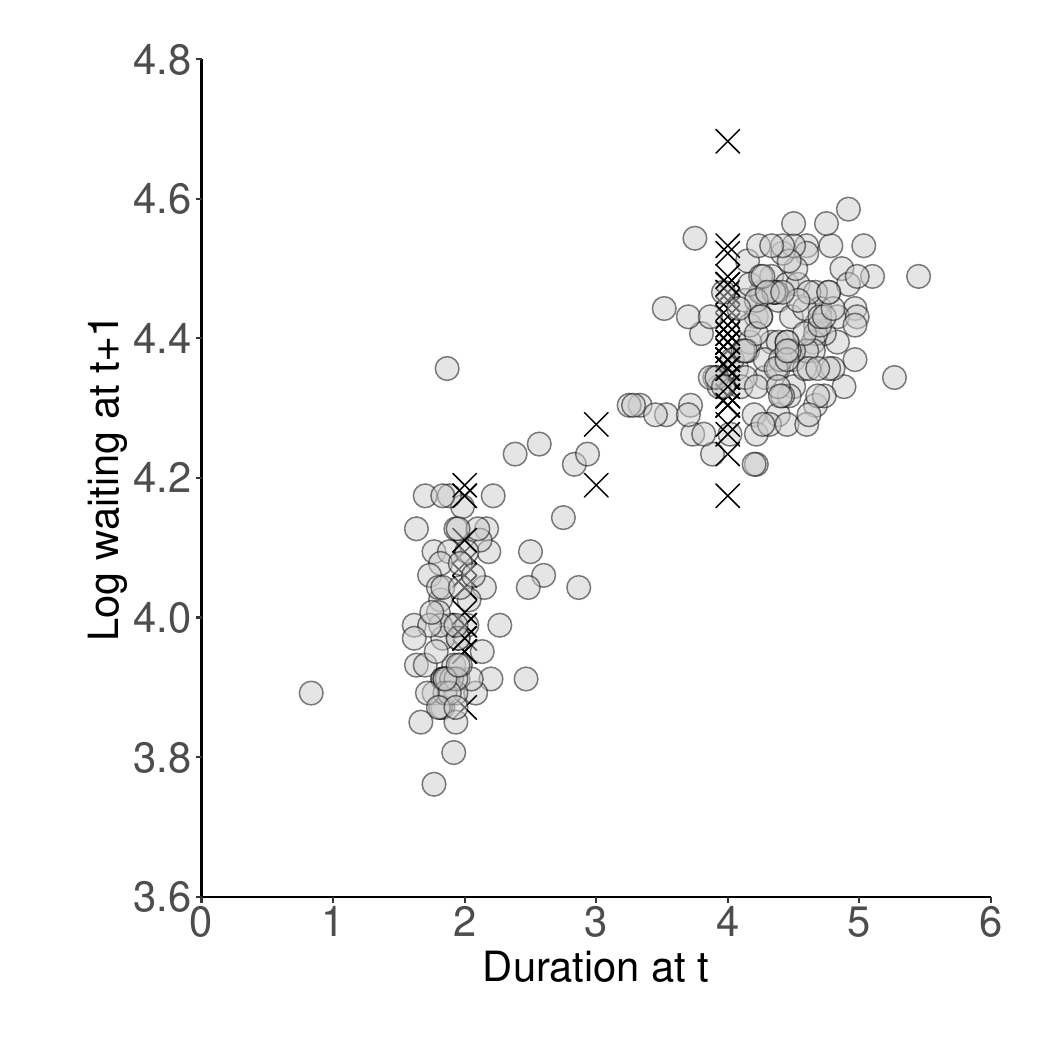}
\caption{The Old Faithful Geyser data.}
\label{fig:The Old Faithful Geyser data.}
\end{figure}
Each observation is the repeated measurement of eruptions of the Old Faithful Geyser in the United States, and consists of two variables: the duration of the eruption and the waiting time between two consecutive eruptions.
Of the 298 observations, 77 have duration times of 2, 3, or 4, as depicted by the x marks in Figure \ref{fig:The Old Faithful Geyser data.}.
These values are rounded because they are collected at night, as noted in \citet{azzalini-bowman-90}.
This is an example of heterogeneity in data quality.

Another example is the diabetes data analyzed by \citet{efron-etal-04}, including diabetes progression measure (Y) as the response and ten predictors (age, sex, body mass index (BMI), average blood pressure (BP), and six blood serum measurements (S1 - S6)).
This dataset is used to illustrate variable selection methods (e.g., \citet{efron-etal-04}, \citet{hahn-carvalho-15}, \citet{miyawaki-maceachern-19} among others).
A close look suggests that the data may come from at least two sources, because the precision of the blood pressure and S4 (the fourth blood serum measurement) is different from that of the other measurements (see ID 95 and 99 in Table \ref{table:Selected 10 rows of diabetes data}).
\begin{table}[!ht]
\begin{center}
\catcode`?=\active \def?{\phantom{0}}
\catcode`!=\active \def!{\phantom{.}}
\caption{Ten observations selected from the diabetes data}
\label{table:Selected 10 rows of diabetes data}
\scalebox{0.9}{
\begin{threeparttable}
\begin{tabular}{@{}rccccccccccc}
\toprule
ID\tnote{$\dagger$} & Age & Sex & BMI & BP & S1 & S2 & S3 & S4 & S5 & S6 & Y \\
                    & ($x_{1}$) & ($x_{2}$) & ($x_{3}$) & ($x_{4}$) & ($x_{5}$) & ($x_{6}$) & ($x_{7}$) & ($x_{8}$) & ($x_{9}$) & ($x_{10}$) & \\
\midrule
91  & 36 & 1 & 21.9 & ?89!?? & 189 & 105.2 & 68 & 3!?? & 4.3694 & ?96 & 111 \\
92  & 52 & 1 & 24!? & ?83!?? & 167 & ?86.6 & 71 & 2!?? & 3.8501 & ?94 & ?98 \\
93  & 61 & 1 & 31.2 & ?79!?? & 235 & 156.8 & 47 & 5!?? & 5.0499 & ?96 & 164 \\
94  & 43 & 1 & 26.8 & 123!?? & 193 & 102.2 & 67 & 3!?? & 4.7791 & ?94 & ?48 \\
95  & 35 & 1 & 20.4 & ?65!?? & 187 & 105.6 & 67 & 2.79 & 4.2767 & ?78 & ?96 \\
96  & 27 & 1 & 24.8 & ?91!?? & 189 & 106.8 & 69 & 3!?? & 4.1897 & ?69 & ?90 \\
97  & 29 & 1 & 21!? & ?71!?? & 156 & ?97!? & 38 & 4!?? & 4.654? & ?90 & 162 \\
98  & 64 & 2 & 27.3 & 109!?? & 186 & 107.6 & 38 & 5!?? & 5.3083 & ?99 & 150 \\
99  & 41 & 1 & 34.6 & ?87.33 & 205 & 142.6 & 41 & 5!?? & 4.6728 & 110 & 279 \\
100 & 49 & 2 & 25.9 & ?91!?? & 178 & 106.6 & 52 & 3!?? & 4.5747 & ?75 & ?92 \\
\bottomrule
\end{tabular}
\begin{tablenotes}[normal]
\item[$\dagger$] The variable ID indicates the number of rows in the data file, which does not appear in the original.
\end{tablenotes}
\end{threeparttable}
}
\end{center}
\end{table}
Such observations are 65 out of 442.

As a preliminary analysis of the diabetes data, we separated observations into two sets based on precision and applied a multivariate linear regression model to each dataset.
The prior distribution was specified as noted in Section \ref{sec:Estimate the prediction error}.
The marginal likelihood was maximized with $(x_{2}, x_{3}, x_{9}, x_{10})$ when only precise data were used and with $(x_{2}, x_{3}, x_{4}, x_{7}, x_{9})$ when only imprecise data were used.
The results suggest that the data may be obtained from two different populations, keeping in mind that the two populations also have common aspects: they are collected from diabetes patients.
How the populations are different/similar will be investigated in Subsection \ref{subsec:Similarity search}.


\section{Two-dataset problem}
\label{sec:Two-dataset problem}


Suppose $n$ observations are independently drawn from one of two populations.
These two populations may be the same or different.
We know which observation comes from which population in terms of, say, precision, but have little knowledge of whether the two populations are identical.
In this paper, this is called the two-dataset problem.
More mathematically, one part of the dataset is randomly sampled from the distribution $f_{1}$, while the other part is from the distribution $f_{2}$.
Parameters of $f_{1}$ and $f_{2}$ maybe the same or different.

Because this paper considers the regression framework, we focus on the conditional expectation of the response given predictors.
The regression framework requires two aspects to consider.
One is the marginal effect on the conditional expectation, and the other is the selection of predictors.
Subsection \ref{subsec:Random coefficients model} addresses the former, while subsection \ref{subsec:A generalization of the random coefficients model} considers both aspects.

Let $\mathcal{S}_{0}$ and $\mathcal{S}_{1}$ be sets of observation indices of samples drawn from the first and second populations, respectively.
Then, $\mathcal{S}_{0} \cup \mathcal{S}_{1} = \{ 1, \dots, n \}$ and $\mathcal{S}_{0} \cap \mathcal{S}_{1} = \emptyset$.
In the context of our application, we use $\mathcal{S}_{0}$ for low-quality data and $\mathcal{S}_{1}$ for high-quality data.
Specifically, $| \mathcal{S}_{0} | = 77$ and $| \mathcal{S}_{1} | = 221$ in the geyser dataset, while $| \mathcal{S}_{0} | = 377$ and $| \mathcal{S}_{1} | = 65$ in the diabetes dataset.
It is worth noting the analysis that takes care of outliers is not appropriate because it restricts ourselves to one identical population, which may not be true in the two-dataset problem.
Furthermore, both kinds contain substantial observations, which is not typical in the outlier analysis.
Thus, they require a different modeling while keeping in mind that they share common aspects.

We apply linear regression to this type of dataset to infer the underlying mechanism between the response and predictors.
If the assumption that the data are from the identical population is the state of nature, estimation methods with the entire dataset are asymptotically correct.
If the data are from different sources, pooled estimation is not appropriate, and estimation with separate datasets is preferable.
However, as in the diabetes dataset, the data may be from different populations that share some aspects to a certain degree.
To address this issue, we consider two aspects of the model: regression coefficients and a set of predictors.
When these factors are taken into account, we can obtain a criterion that sheds light on the two-dataset problem.

The general prediction error considered in this paper is the following.
Let $\hat{f}_{0} (\bm{x})$ be a prediction based on data for $\mathcal{S}_{0}$.
Then, the prediction error of data $\mathcal{S}_{1}$ is given by
\begin{align*}
E \left[ L \left\{ Y_{1} - \hat{f}_{0} \left( \bm{x}_{1} \right) \right\} \mid \left( y_{i}, \bm{x}_{i} \right)_{i \in \mathcal{S}_{0}}, \bm{x}_{1} \right],
\end{align*}
where $(Y_{1}, \bm{x}_{1})$ is the response and predictors obtained from the population for $\mathcal{S}_{1}$ and $L (\cdot)$ is the loss function.
In this paper, we use the squared loss.
This expression is the error of data $\mathcal{S}_{1}$ given $\mathcal{S}_{0}$.
Similarly, we define the error of data $\mathcal{S}_{0}$ given $\mathcal{S}_{1}$.
The following two subsections apply this framework to the linear regression model, in which the prediction error is specified by the model parameters and observed data.


\subsection{Random coefficients model}
\label{subsec:Random coefficients model}


A possible specification of the two-dataset problem is the random coefficients model (see, e.g., \citet{griffiths-etal-79}).
It assumes associations between the response and predictors that are different between datasets while coefficients are obtained from the same population.
Then, as specified below, a parameter in the random coefficients assumption demonstrates how these two associations are close with each other.
Hereafter, predictors are standardized within each dataset for brevity.

For each dataset, we consider a linear regression model with different regression coefficients, which is given by
\begin{align*}
\begin{cases}
Y_{i} = \bm{x}_{0i}^{\prime} \bm{\beta}_{0} + \epsilon_{0i}, &\text{for $i \in \mathcal{S}_{0}$}, \\
Y_{i} = \bm{x}_{1i}^{\prime} \bm{\beta}_{1} + \epsilon_{1i}, &\text{for $i \in \mathcal{S}_{1}$},
\end{cases}
\end{align*}
where $Y_{i}$ is the response for observation $i$ ($i = 1, \dots, n$), $\bm{x}_{ji}$ is a vector of predictors in model $j$, and the error term $\epsilon_{ji}$ has mean $0$ and variance $\sigma_{j}^{2}$ for $j = 0, 1$.
In this specification, the predictors are common and arranged in the same order across the two datasets.
When data for $\mathcal{S}_{j}$ are stacked by rows, the above model can be represented in matrix form as
\begin{align}
\begin{cases}
\bm{y}_{0} = \bm{X}_{0} \bm{\beta}_{0} + \bm{\epsilon}_{0}, \\
\bm{y}_{1} = \bm{X}_{1} \bm{\beta}_{1} + \bm{\epsilon}_{1},
\end{cases}
\label{eq:RCM common}
\end{align}
where $\bm{y}_{j} = (Y_{i})_{i \in \mathcal{S}_{j}}$, $\bm{X}_{j}^{\prime} = (\bm{x}_{ji})_{i \in \mathcal{S}_{j}}$, and $\bm{\epsilon}_{j} = (\epsilon_{ji})_{i \in \mathcal{S}_{j}}$ for $j = 0, 1$.

To exploit the similarity across datasets, variation in the regression coefficients is explicitly considered, so that the random coefficients assumption is introduced; that is,
\begin{align}
\bm{\beta}_{1} = \bm{\beta}_{0} + \bm{\eta},
\label{eq:random coefficients assumption}
\end{align}
where $\bm{\eta}$ has mean $0$ and variance covariance matrix $\sigma_{\eta}^{2} \bm{I}$ and $\bm{I}$ is the unit matrix.
Independence between $\bm{\eta}$ and $\bm{\epsilon}_{j}$ is assumed $(j = 0, 1)$.

The former error represents the deviance of the association between the response and predictors among datasets, while the latter includes the measurement and specification errors.
When the skill heterogeneity exists, it might be reasonable to assume dependence between these two terms because such heterogeneity might affect the association as well as the measurement.
However, because the data are from the identical population in this case, the deviance is expected to be relatively small, and the independence assumption would not much affect the final result, i.e., the prediction error.
In addition, it is possible to incorporate more structures in this assumption such as dependence and heteroskedasticity, depending on the context to apply.
However, to make things as simple as possible, we use this specification in this paper.

Under the above setting, the following proposition holds.
\begin{prop}
Let $\hat{\bm{y}}_{1 \mid 0}$ be the prediction of $\bm{y}_{1}$ using the least squares estimate of $\bm{\beta}_{0}$.
Then, the expectation of the average squared prediction error is given by
\begin{align*}
\frac{1}{n_{1}} E \left( \bm{y}_{1} - \hat{\bm{y}}_{1 \mid 0} \right)^{\prime} \left( \bm{y}_{1} - \hat{\bm{y}}_{1 \mid 0} \right)
=
\sigma_{1}^{2} + k \sigma_{\eta}^{2} + \frac{\sigma_{0}^{2}}{n_{1}} \text{tr} \left\{ \left( \bm{X}_{0}^{\prime} \bm{X}_{0} \right)^{-1} \bm{X}_{1}^{\prime} \bm{X}_{1} \right\},
\end{align*}
where $n_{j} = | \mathcal{S}_{j} |$ and $k$ is the number of predictors.
\label{prop:rcm}
\end{prop}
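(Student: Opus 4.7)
The plan is to substitute the model equations into the prediction error, expand the resulting quadratic form, and take expectations termwise, exploiting the mutual independence of the three noise sources $\bm{\epsilon}_{0}$, $\bm{\epsilon}_{1}$, and $\bm{\eta}$.

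First I would write the error so as to expose those three sources cleanly. Since the least squares estimator satisfies $\hat{\bm{\beta}}_{0} = \bm{\beta}_{0} + (\bm{X}_{0}^{\prime} \bm{X}_{0})^{-1} \bm{X}_{0}^{\prime} \bm{\epsilon}_{0}$, and $\bm{\beta}_{1} = \bm{\beta}_{0} + \bm{\eta}$, direct substitution into $\bm{y}_{1} - \hat{\bm{y}}_{1 \mid 0} = \bm{X}_{1} \bm{\beta}_{1} + \bm{\epsilon}_{1} - \bm{X}_{1} \hat{\bm{\beta}}_{0}$ yields
\[
\bm{y}_{1} - \hat{\bm{y}}_{1 \mid 0} \;=\; \bm{\epsilon}_{1} \;+\; \bm{X}_{1} \bm{\eta} \;-\; \bm{X}_{1} (\bm{X}_{0}^{\prime} \bm{X}_{0})^{-1} \bm{X}_{0}^{\prime} \bm{\epsilon}_{0}.
\]

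Next I would expand $(\bm{y}_{1} - \hat{\bm{y}}_{1 \mid 0})^{\prime} (\bm{y}_{1} - \hat{\bm{y}}_{1 \mid 0})$ and take expectations. Because the three summands above are linear functions of three mutually independent mean-zero random vectors, all six cross products have zero expectation and only three quadratic pieces survive. For each I would invoke the standard identity $E[\bm{z}^{\prime} M \bm{z}] = \text{tr}\{M \, \text{Var}(\bm{z})\}$ for a centered $\bm{z}$. This gives $E[\bm{\epsilon}_{1}^{\prime} \bm{\epsilon}_{1}] = n_{1} \sigma_{1}^{2}$ and $E[\bm{\eta}^{\prime} \bm{X}_{1}^{\prime} \bm{X}_{1} \bm{\eta}] = \sigma_{\eta}^{2} \, \text{tr}(\bm{X}_{1}^{\prime} \bm{X}_{1})$, and, after using the cyclic property of the trace to collapse the $\bm{X}_{0}$-sandwich, $E[\bm{\epsilon}_{0}^{\prime} \bm{X}_{0} (\bm{X}_{0}^{\prime} \bm{X}_{0})^{-1} \bm{X}_{1}^{\prime} \bm{X}_{1} (\bm{X}_{0}^{\prime} \bm{X}_{0})^{-1} \bm{X}_{0}^{\prime} \bm{\epsilon}_{0}] = \sigma_{0}^{2} \, \text{tr}\{(\bm{X}_{0}^{\prime} \bm{X}_{0})^{-1} \bm{X}_{1}^{\prime} \bm{X}_{1}\}$.

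The last observation I would make is that the standardization convention adopted at the start of Subsection~\ref{subsec:Random coefficients model} forces each diagonal entry of $\bm{X}_{1}^{\prime} \bm{X}_{1}$ to equal $n_{1}$, so $\text{tr}(\bm{X}_{1}^{\prime} \bm{X}_{1}) = n_{1} k$ and the $\bm{\eta}$-contribution simplifies to $n_{1} k \sigma_{\eta}^{2}$. Dividing the three surviving terms by $n_{1}$ then assembles them into the claimed formula. If there is any obstacle, it is purely bookkeeping: tracking which independence relation annihilates which cross term, and remembering that the clean coefficient $k \sigma_{\eta}^{2}$ — rather than the generic $\sigma_{\eta}^{2} \text{tr}(\bm{X}_{1}^{\prime} \bm{X}_{1}) / n_{1}$ — is produced by the standardization hypothesis and would change if that hypothesis were relaxed.
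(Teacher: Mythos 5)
Your proposal is correct and follows essentially the same route as the paper's own proof: the same decomposition $\bm{y}_{1} - \hat{\bm{y}}_{1 \mid 0} = \bm{\epsilon}_{1} + \bm{X}_{1}\bm{\eta} - \bm{X}_{1}(\bm{X}_{0}^{\prime}\bm{X}_{0})^{-1}\bm{X}_{0}^{\prime}\bm{\epsilon}_{0}$, the same elimination of cross terms by independence, the same trace identities, and the same appeal to the standardization $\frac{1}{n_{1}}\sum_{i \in \mathcal{S}_{1}} x_{1,il}^{2} = 1$ to reduce $\sigma_{\eta}^{2}\,\text{tr}(\bm{X}_{1}^{\prime}\bm{X}_{1})$ to $n_{1} k \sigma_{\eta}^{2}$. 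Your closing remark that the clean coefficient $k\sigma_{\eta}^{2}$ hinges on the standardization hypothesis is a correct and worthwhile observation that the paper makes only implicitly.
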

\begin{proof}
Because the least squares estimate of $\bm{\beta}_{0}$ is $( \bm{X}_{0}^{\prime} \bm{X}_{0} )^{-1} \bm{X}_{0}^{\prime} \bm{y}_{0}$, the prediction based on data for $\mathcal{S}_{0}$ is given by
\begin{align*}
\hat{\bm{y}}_{1 \mid 0} &= \bm{X}_{1} \left( \bm{X}_{0}^{\prime} \bm{X}_{0} \right)^{-1} \bm{X}_{0}^{\prime} \bm{y}_{0}
                         = \bm{X}_{1} \left( \bm{X}_{0}^{\prime} \bm{X}_{0} \right)^{-1} \bm{X}_{0}^{\prime} \left( \bm{X}_{0} \bm{\beta}_{0} + \bm{\epsilon}_{0} \right) \notag \\
                        &= \bm{X}_{1} \bm{\beta}_{1} - \bm{X}_{1} \bm{\eta} + \bm{X}_{1} \left( \bm{X}_{0}^{\prime} \bm{X}_{0} \right)^{-1} \bm{X}_{0}^{\prime} \bm{\epsilon}_{0}.
\end{align*}
The independence of error terms leads to
\begin{align*}
E \left( \bm{y}_{1} - \hat{\bm{y}}_{1 \mid 0} \right)^{\prime} \left( \bm{y}_{1} - \hat{\bm{y}}_{1 \mid 0} \right)
=
&E \left( \bm{\epsilon}_{1}^{\prime} \bm{\epsilon}_{1} \right)
+
E \left( \bm{\eta}^{\prime} \bm{X}_{1}^{\prime} \bm{X}_{1} \bm{\eta} \right) \notag \\
&+
E \left\{
\bm{\epsilon}_{0}^{\prime} \bm{X}_{0} \left( \bm{X}_{0}^{\prime} \bm{X}_{0} \right)^{-1} \bm{X}_{1}^{\prime} \bm{X}_{1} \left( \bm{X}_{0}^{\prime} \bm{X}_{0} \right)^{-1} \bm{X}_{0}^{\prime} \bm{\epsilon}_{0}
\right\}.
\end{align*}
After exchanging the trace and expectation operators, the right-hand side becomes
\begin{align*}
n_{1} \sigma_{1}^{2} + n_{1} k \sigma_{\eta}^{2} + \sigma_{0}^{2} \text{tr} \left\{ \left( \bm{X}_{0}^{\prime} \bm{X}_{0} \right)^{-1} \bm{X}_{1}^{\prime} \bm{X}_{1} \right\}.
\end{align*}
The second term is obtained because the predictors are standardized, i.e., $\frac{1}{n_{1}} \sum_{i \in \mathcal{S}_{1}} x_{1, il}^{2} = 1$ for all $l = 1, \dots, k$, where $x_{1, il}$ is the $l$-th predictor for observation $i$ in $\mathcal{S}_{1}$.
Dividing both sides by $n_{1}$ yields the result.
\end{proof}
We note a couple of points regarding to this result.
First, when the predictors are orthogonal to each other within each dataset, the third term reduces to $k \sigma_{0}^{2} /n_{0}$.
In this case, the prediction error converges to $\sigma_{1}^{2} + k \sigma_{\eta}^{2}$ as the sample size of $\mathcal{S}_{0}$, i.e., $n_{0}$, increases to infinity with the number of predictors fixed.
Second, the random coefficients assumption \ref{eq:random coefficients assumption} implies two datasets are similar when $\sigma_{\eta}^{2}$ is small.
This parameter contributes to the prediction error with the multiplier of $k$, the number of predictors.
Thus, even when $\sigma_{\eta}^{2}$ is small, the number of predictors matters.


\subsection{A generalization of the random coefficients model}
\label{subsec:A generalization of the random coefficients model}


The random coefficients model assumes that both datasets follow the same model but can differ in their marginal effects.
However, the model is also likely to vary across datasets.
In the context of linear regression, different models equate to different sets of predictors.
Corresponding to this situation, we have a generalization of the random coefficients model, which is given by
\begin{align}
\begin{cases}
\bm{y}_{0} = \bm{Z}_{0} \bm{\alpha}_{0} + \bm{X}_{0} \bm{\beta}_{0} + \bm{\epsilon}_{0}, \\
\bm{y}_{1} = \bm{W}_{1} \bm{\alpha}_{1} + \bm{X}_{1} \bm{\beta}_{1} + \bm{\epsilon}_{1}.
\end{cases}
\label{eq:RCM slide}
\end{align}
As in the previous model, $\bm{X}_{0}$ and $\bm{X}_{1}$ have the same predictors in their corresponding columns, and the homoskedastic random coefficients structure is retained in $\bm{\beta}_{j}$s.
The error term also has the same mean and variance as in the random coefficient model.
The remaining terms, $\bm{Z}_{0}$ and $\bm{W}_{1}$, may or may not share predictors, and we do not impose any specific structure on their regression coefficients $\bm{\alpha}_{j}$s.

The prediction of $\bm{y}_{1}$ in this model is given by the least squares estimate of $\bm{\beta}_{0}$ and $\bm{\alpha}_{1}$.
More precisely,
\begin{align*}
\hat{\bm{y}}_{1 \mid 0}
&=
\bm{W}_{1} \hat{\bm{\alpha}}_{1} + \bm{X}_{1} \hat{\bm{\beta}}_{0}, &&
\intertext{where}
\hat{\bm{\beta}}_{0} &= \left( \tilde{\bm{X}}_{0}^{\prime} \tilde{\bm{X}}_{0} \right)^{-1} \tilde{\bm{X}}_{0}^{\prime} \bm{y}_{0},
&
\hat{\bm{\alpha}}_{1} &= \left( \bm{W}_{1}^{\prime} \bm{W}_{1} \right)^{-1} \bm{W}_{1}^{\prime} \left( \bm{y}_{1} - \bm{X}_{1} \hat{\bm{\beta}}_{0} \right), \\
\tilde{\bm{X}}_{0} &= \bm{M}_{0} \bm{X}_{0},
&
\bm{M}_{0} &= \bm{I} - \bm{Z}_{0} \left( \bm{Z}_{0}^{\prime} \bm{Z}_{0} \right)^{-1} \bm{Z}_{0}^{\prime}.
\end{align*}
The matrix $\bm{M}_{0}$, which is called the annihilator in econometrics, is the difference between the unit matrix and the projection matrix.
The above result is a direct application of the Frisch-Waugh theorem (see \citet{frisch-waugh-33} for this theorem).

Further calculation leads to the following proposition, which is a generalization of the previous one.
\begin{prop}
The expectation of the average squared prediction error under the model and prediction described above is given by
\begin{align*}
\frac{1}{n_{1}} E \left( \bm{y}_{1} - \hat{\bm{y}}_{1 \mid 0} \right)^{\prime} \left( \bm{y}_{1} - \hat{\bm{y}}_{1 \mid 0} \right)
=
&\frac{n_{1} - k_{1}}{n_{1}} \sigma_{1}^{2}
+
\frac{\sigma_{\eta}^{2}}{n_{1}} \text{tr} \left( \tilde{\bm{X}}_{1}^{\prime} \tilde{\bm{X}}_{1} \right) \notag \\
&+
\frac{\sigma_{0}^{2}}{n_{1}} \text{tr} \left\{ \left( \tilde{\bm{X}}_{0}^{\prime} \tilde{\bm{X}}_{0} \right)^{-1} \tilde{\bm{X}}_{1}^{\prime} \tilde{\bm{X}}_{1} \right\},
\end{align*}
where $k_{1}$ is the number of predictors in $\bm{W}_{1}$, $\tilde{\bm{X}}_{1} = \bm{M}_{1} \bm{X}_{1}$, and $\bm{M}_{1} = \bm{I} - \bm{W}_{1} \left( \bm{W}_{1}^{\prime} \bm{W}_{1} \right)^{-1} \bm{W}_{1}^{\prime}$.
\label{prop:generalized}
\end{prop}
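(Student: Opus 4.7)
The plan is to mimic the argument for Proposition \ref{prop:rcm}, but first collapse the $\bm{W}_1\hat{\bm{\alpha}}_1$ piece of the predictor using the annihilator $\bm{M}_1$. Substituting the closed form of $\hat{\bm{\alpha}}_1$ into the predictor gives
\begin{align*}
\bm{y}_1 - \hat{\bm{y}}_{1\mid 0}
&= \left( \bm{I} - \bm{W}_1 \left( \bm{W}_1^\prime \bm{W}_1 \right)^{-1} \bm{W}_1^\prime \right) \left( \bm{y}_1 - \bm{X}_1 \hat{\bm{\beta}}_0 \right)
 = \bm{M}_1 \left( \bm{y}_1 - \bm{X}_1 \hat{\bm{\beta}}_0 \right).
\end{align*}
Because $\bm{M}_1 \bm{W}_1 = \bm{0}$, this eliminates the nuisance coefficient $\bm{\alpha}_1$ entirely. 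Using $\bm{M}_1 \bm{X}_1 = \tilde{\bm{X}}_1$ and the assumed model for $\bm{y}_1$, the residual simplifies to $\tilde{\bm{X}}_1 \left( \bm{\beta}_1 - \hat{\bm{\beta}}_0 \right) + \bm{M}_1 \bm{\epsilon}_1$.

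Next I would evaluate $\hat{\bm{\beta}}_0 - \bm{\beta}_0$ by the same device applied to the first equation. The Frisch-Waugh representation gives $\hat{\bm{\beta}}_0 = \left( \tilde{\bm{X}}_0^\prime \tilde{\bm{X}}_0 \right)^{-1} \tilde{\bm{X}}_0^\prime \bm{y}_0$; substituting $\bm{y}_0 = \bm{Z}_0 \bm{\alpha}_0 + \bm{X}_0 \bm{\beta}_0 + \bm{\epsilon}_0$ and exploiting $\bm{M}_0 \bm{Z}_0 = \bm{0}$ together with $\tilde{\bm{X}}_0^\prime \bm{X}_0 = \tilde{\bm{X}}_0^\prime \tilde{\bm{X}}_0$ (since $\bm{M}_0$ is symmetric idempotent) yields $\hat{\bm{\beta}}_0 = \bm{\beta}_0 + \left( \tilde{\bm{X}}_0^\prime \tilde{\bm{X}}_0 \right)^{-1} \tilde{\bm{X}}_0^\prime \bm{\epsilon}_0$. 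Combining this with $\bm{\beta}_1 = \bm{\beta}_0 + \bm{\eta}$ gives the additive decomposition
\begin{align*}
\bm{y}_1 - \hat{\bm{y}}_{1\mid 0}
= \tilde{\bm{X}}_1 \bm{\eta} - \tilde{\bm{X}}_1 \left( \tilde{\bm{X}}_0^\prime \tilde{\bm{X}}_0 \right)^{-1} \tilde{\bm{X}}_0^\prime \bm{\epsilon}_0 + \bm{M}_1 \bm{\epsilon}_1.
\end{align*}

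The three pieces involve $\bm{\eta}$, $\bm{\epsilon}_0$, and $\bm{\epsilon}_1$, which are mutually independent and mean zero, so all cross terms vanish when I expand the quadratic form. What remains is a sum of three traces, handled by the cyclic property of the trace and $\text{tr}(\bm{M}_1) = n_1 - k_1$: $E\left( \bm{\epsilon}_1^\prime \bm{M}_1 \bm{\epsilon}_1 \right) = (n_1 - k_1)\sigma_1^2$, $E\left( \bm{\eta}^\prime \tilde{\bm{X}}_1^\prime \tilde{\bm{X}}_1 \bm{\eta} \right) = \sigma_\eta^2 \text{tr}\left( \tilde{\bm{X}}_1^\prime \tilde{\bm{X}}_1 \right)$, and the $\bm{\epsilon}_0$ term delivers $\sigma_0^2 \text{tr}\{ (\tilde{\bm{X}}_0^\prime \tilde{\bm{X}}_0)^{-1} \tilde{\bm{X}}_1^\prime \tilde{\bm{X}}_1 \}$. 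Dividing by $n_1$ finishes the proof.

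The main obstacle I anticipate is organizational rather than computational: one must resist the temptation to treat $\bm{\eta}$ as in Proposition \ref{prop:rcm} (where standardization let its contribution collapse to $k\sigma_\eta^2$), since here the projected design $\tilde{\bm{X}}_1 = \bm{M}_1 \bm{X}_1$ is no longer column-standardized and the $\bm{\eta}$ term must be left in its trace form. Beyond that, the two key simplifications --- killing $\bm{\alpha}_1$ via $\bm{M}_1$ and killing $\bm{\alpha}_0$ via $\bm{M}_0$ --- are exactly parallel and can be invoked as two applications of Frisch-Waugh.
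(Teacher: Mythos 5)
Your proposal is correct and follows essentially the same route as the paper's Appendix A proof: both reduce the residual to $\bm{M}_{1}(\bm{y}_{1}-\bm{X}_{1}\hat{\bm{\beta}}_{0})$ via Frisch--Waugh, decompose it into independent $\bm{\epsilon}_{1}$, $\bm{\eta}$, and $\bm{\epsilon}_{0}$ pieces, and evaluate the three resulting traces. The only cosmetic difference is that you absorb $\bm{M}_{1}$ into $\tilde{\bm{X}}_{1}$ before taking expectations, whereas the paper carries $\bm{M}_{1}$ through and simplifies $\text{tr}(\bm{Q}^{\prime}\bm{M}_{1}\bm{Q})$ at the end; your remark about the $\bm{\eta}$ term not collapsing to $k\sigma_{\eta}^{2}$ is also on point.
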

The proof is very similar to the previous one, and is thus left to Appendix \ref{app:A proof of Proposition 2}.
When the constant regressor is explicitly included in the model and is in $\bm{X}_{j}$s, $n_{1}$ and $\frac{n_{1}}{n_{0}}$ are added to the first and second trace terms, respectively.
In this paper, we always include the constant regressor in the common predictors ($X_{j}$s).

As in the first comment on Proposition \ref{prop:rcm}, the second and third terms become $(k - k_{1}) \sigma_{\eta}$ and $(k - k_{1}) \sigma_{0}^{2} / n_{0}$, respectively, when predictors are orthogonal to each other.
It is reasonable because $k - k_{1}$ predictors are common between two regression models \eqref{eq:RCM slide} and their influences are included in the second and third terms of the prediction error .
Then, the prediction error is reduced to $(n_{1} - k_{1}) \sigma_{1}^{2} / n_{1} + (k - k_{1}) \sigma_{\eta} + (k - k_{1}) \sigma_{0}^{2} / n_{0}$, and converges to $\sigma_{1}^{2} + (k - k_{1}) \sigma_{\eta}$ as the sample sizes of $\mathcal{S}_{0}$ and $\mathcal{S}_{1}$ increase to infinity keeping the number of predictors fixed.


\section{Estimation of the prediction error}
\label{sec:Estimate the prediction error}


Three parameters ($\sigma_{j}^{2}$s, $\sigma_{\eta}^{2}$) need to be established in order to estimate the prediction error.
We take the Bayesian approach to their estimation because it is flexible and straightforward to calculate prediction error estimates for our random coefficients models in a hierarchical manner and these estimates can be derived under the squared loss (see, e.g., \citet{berger-85} for other desired properties of the Bayesian approach).


\subsection{Estimation of $\sigma_{j}^{2}$}

The model to be used for estimation is either Equation \eqref{eq:RCM common} or \eqref{eq:RCM slide}.
In addition to this model, we assume error terms to follow the normal distribution with specified mean and variance.
Then, it becomes the normal linear regression model but the dataset for estimation is restricted to $\mathcal{S}_{j}$.

For regression coefficients except for the intercept, we assume the hyper-$g$ prior, setting its hyperparameter as $a = 3$ (see \citet{liang-etal-08}).
Remaining parameters are assumed to follow noninformative prior distributions: the prior for the intercept is proportional to a constant, and the priors for the variances of the regression errors are $\pi ( \sigma_{j}^{2} ) \propto \sigma_{j}^{-2}$ ($j = 0, 1$).
With this prior specification, posterior means for model parameters are analytically tractable (see, e.g., \citet{miyawaki-maceachern-19}).
Then, the variances of the regression errors are estimated as their corresponding posterior means.


\subsection{Estimation of $\sigma_{\eta}^{2}$}

There are two approaches to estimate the variance of the random coefficients structure, $\sigma_{\eta}^{2}$.
The first is to estimate it as the sample variance of the posterior means of the regression coefficients described in the previous subsection without correcting for the degrees of freedom.
The other is to specify an additional prior on $\sigma_{\eta}^{2}$ and use its posterior mean as its estimate.
Although the latter is straightforward from the theoretical view point, we recommend the former approach because of its simplicity and less computational burden.
Comparison of these approaches will be discussed in the next subsection.



\subsection{Comparison of two approaches}
\label{subsec:Evaluation of hat{sigma}_{eta}^{2}}


This subsection evaluates how the estimation approach affects the prediction error.
First, we describe the approach that specifies standard prior distributions on all model parameters.
We assume the half standard Cauchy prior distribution on $\sigma_{\eta}$ (see, e.g., \citet{gelman-06} for a discussion of this prior specification).
The other priors remain the same.
Because the posterior means under this prior setting are not analytically tractable, we use the Markov chain Monte Carlo method to draw samples from the posterior and estimate the posterior means.
The full conditional distributions are derived in Appendix \ref{app:Full conditional distributions for the random coefficients model}.

Next, we compare these two approaches by using the diabetes data.
The approximated approach is our recommendation, while the approach with the Cauchy prior is the one described above.
The following comparison is performed model by model, and we restrict our attention to models in which the number of common predictors (not including the constant regressor) is greater than seven.
The resulting number of models is 436.

Figure \ref{fig:Prediction error based on the approximated approach and standardized deviance.} is the scatter plot of prediction errors based on the approximated approach and standardized absolute deviances of prediction errors between two approaches.
\begin{figure}[ht]
\centering
\includegraphics[width=12cm, clip, keepaspectratio]{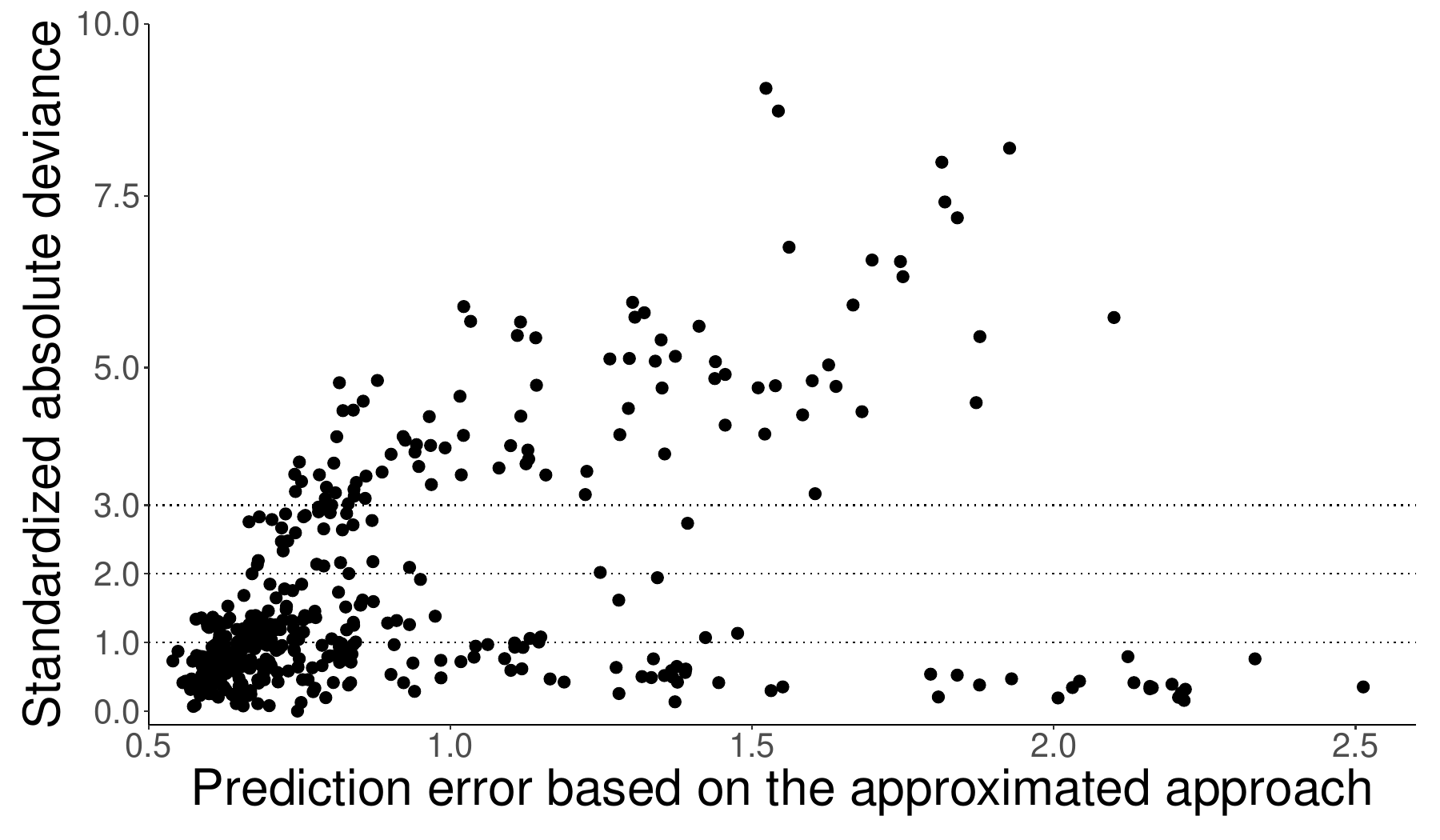}
\caption{Prediction errors based on the approximated approach and standardized deviances.}
\label{fig:Prediction error based on the approximated approach and standardized deviance.}
\end{figure}
While the approach with the Cauchy prior produces samples from the marginal posterior distribution of the prediction error, the approximated approach yields its point estimate.
We take the posterior mean for the former as its point estimate.
Then, the absolute deviance between these two is a measure of difference between these approaches.
Further, to be comparable between models, this deviance is standardized by the posterior standard deviation from the approach with the Cauchy prior, which is termed as the standardized absolute deviance in this figure.

When prediction errors are small (which is mainly of our interest), this deviance is small, and vice versa.
This result suggests that the approximated approach performs well when its prediction errors are small at least with the diabetes data.

The deviance can be decomposed into three components: deviances of $\sigma_{\eta}^{2}, \sigma_{0}^{2}, \sigma_{1}^{2}$.
They are further examined in Figure \ref{fig:Scatter plots of the two variance estimates.} and Table \ref{table:Difference between the alternative Bayesian and approximation approaches}.
The left panel of Figure \ref{fig:Scatter plots of the two variance estimates.} is the scatter plot of estimated $\sigma_{\eta}^{2}$ based on two approaches and the second row of Table \ref{table:Difference between the alternative Bayesian and approximation approaches} gives summary statistics of difference of the two.
\begin{figure}[ht]
\centering
\includegraphics[width=14cm, clip, keepaspectratio]{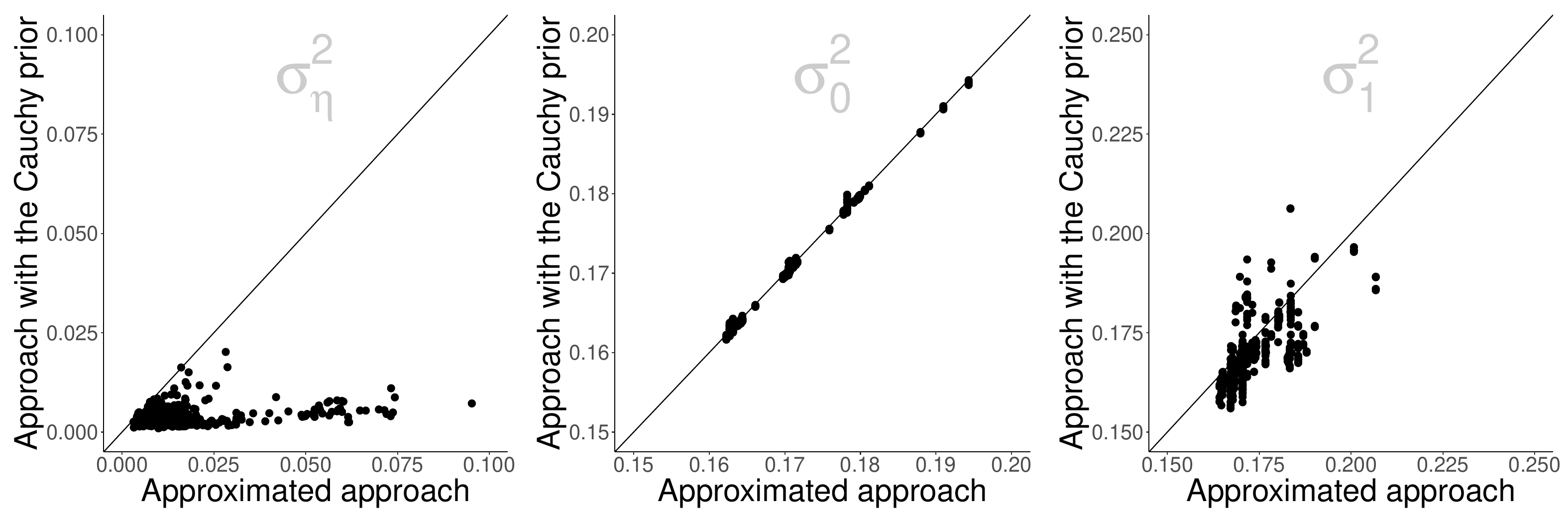}
\caption{Scatter plots of the two variance estimates.}
\label{fig:Scatter plots of the two variance estimates.}
\end{figure}
\begin{table}[!ht]
\begin{center}
\catcode`?=\active \def?{\phantom{0}}
\catcode`!=\active \def!{\phantom{-}}
\caption{Difference between the approach with the Cauchy prior and the approximated approach}
\label{table:Difference between the alternative Bayesian and approximation approaches}
\scalebox{0.8}{
\begin{tabular}{@{}lcccccc}
\toprule
         & Minimum & 1st Quartile & Median & Mean (SD) & 3rd Quartile & Maximum \\
\midrule
$\hat{\sigma}_{\eta}^{2}$ & $-0.088??$ & $-0.014??$ & $-0.0063?$ & $-0.012??$ ($0.015??$) & $-0.0035?$ & $0.00040$ \\
$\hat{\sigma}_{0}^{2}$    & $-0.00078$ & $-0.00037$ & $-0.00025$ & $-0.00015$ ($0.00037$) & $-0.00011$ & $0.0016?$ \\
$\hat{\sigma}_{1}^{2}$    & $-0.021??$ & $-0.0066?$ & $-0.0038?$ & $-0.0042?$ ($0.0063?$) & $-0.0018?$ & $0.023??$ \\
\bottomrule
\end{tabular}
}
\end{center}
\end{table}%
The estimated $\sigma_{\eta}^{2}$s based on the approximated approach are larger than those estimated from the approach with the Cauchy prior because most dots are below the 45 degree line.
In contrast, $\sigma_{j}^{2}$s ($j = 0, 1$) are not affected in this manner because dots are around the 45 degree line (see the middle and right panels of above figure and the last two rows of this table).

Combining above results, the difference of two approaches shown in Figure \ref{fig:Prediction error based on the approximated approach and standardized deviance.} mainly comes from the difference of estimates of $\sigma_{\eta}^{2}$.
We consider this is partly due to the prior specification and its further investigation is our future work.

The approach with the Cauchy prior has two disadvantages.
First, it is computationally intensive, because we need to use the Markov chain Monte Carlo method to estimate the posterior means, which, for this subset of models, takes approximately 20 minutes with the Fortran language and a 3.4 GHz Intel Core i7 computer.
The number of models grows rapidly when we investigate the heterogeneity/similarity in the set of predictors (see Subsection \ref{subsec:Similarity search}), which makes this specification infeasible.
Second, the estimation is sensitive to the prior choice when the number of common predictors is small.
When only the constant regressor is common, as an extreme case, the full conditional for $\sigma_{\eta}^{2}$ does not have a mean, which produces a larger posterior mean.
Future research will involve a search for a prior that is both robust to the number of predictors and simple for estimation.


\section{Illustrative applications}
\label{sec:Illustrative applications}


\subsection{Internal data compatibility}


This subsection focuses on the problem raised by data quality and applies Proposition \ref{prop:rcm} to the geyser data.
Because they are repeated observations, it is reasonable to assume that they are from the same population.
Then, the primary interest of this application is to decide to pool or not to pool two datasets to estimate the model parameters.
The response is the logarithm of waiting time at time $t+1$, and two potential predictors are considered: the duration at time $t$ (denoted by $x_{1}$), and an indicator variable that is one if the duration is less than or equal to 2.5 and is zero otherwise (denoted by $x_{2}$).

Table \ref{table:Internal quality of the geyser data} presents the prediction errors estimated with the approximated approach when changing a set of predictors.
\begin{table}[!ht]
\begin{center}
\catcode`?=\active \def?{\phantom{0}}
\catcode`!=\active \def!{\phantom{.}}
\caption{Internal quality of the geyser data}
\label{table:Internal quality of the geyser data}
\begin{tabular}{@{}lcccc}
\toprule
Predictors       & \multicolumn{4}{c}{$\mathcal{S}_{0} \mid \mathcal{S}_{1}$} \\ \cline{2-5}
                 & Term 1 & Term 2 & Term 3 & Prediction error \\
\midrule
$\emptyset$      & 0.032? & 0!????? & 0.00021 & 0.032? \\
$x_{1}$          & 0.0068 & 0.0020? & 0.00008 & 0.0089 \\
$x_{2}$          & 0.0074 & 0.00005 & 0.00009 & 0.0076 \\
$(x_{1}, x_{2})$ & 0.0069 & 0.00089 & 0.00008 & 0.0079 \\
\midrule
Predictors       & \multicolumn{4}{c}{$\mathcal{S}_{1} \mid \mathcal{S}_{0}$} \\ \cline{2-5}
                 & Term 1 & Term 2 & Term 3 & Prediction error \\
\midrule
$\emptyset$      & 0.046? & 0!????? & 0.00041 & 0.047?? \\
$x_{1}$          & 0.0087 & 0.0020? & 0.00018 & 0.010?? \\
$x_{2}$          & 0.010? & 0.00005 & 0.00019 & 0.010?? \\
$(x_{1}, x_{2})$ & 0.0081 & 0.00089 & 0.00055 & 0.0096? \\%
\bottomrule
\end{tabular}
\end{center}
\end{table}
The first column indicates which predictors are included in the model, noting that the constant regressor is always included.
The last column gives the estimated prediction error, which is the sum of the preceding three columns (the second to fourth columns), as presented in Proposition \ref{prop:rcm}.

When the same predictors are given, two prediction errors are close (for example, 0.0089 and 0.010 for the model with $x_{1}$ only), suggesting that it is reasonable to pool the entire set of data to estimate the model parameters of interest.
We note a couple of points regarding above result.
First, that the prediction errors of $\mathcal{S}_{0} \mid \mathcal{S}_{1}$ are smaller than those of $\mathcal{S}_{1} \mid \mathcal{S}_{0}$ suggests that precise data are useful for developing better models at the cost of collecting high-quality data.
Second, the binary variable ($x_{2}$) is comparable to the continuous variable ($x_{1}$) in terms of prediction, indicating that the rounded measurements at night (2 or 4) are sufficient for prediction based on linear regression.


\subsection{Similarity search}
\label{subsec:Similarity search}


This subsection considers the two-dataset problem characterized by different data sources.
Proposition \ref{prop:generalized} can be used to evaluate the heterogeneity/similarity in the set of predictors.

In general, when we have two potential model specifications, the model with the smaller prediction error is the better model, in terms of prediction.
In the two-dataset problem, we obtain two prediction errors by changing $0$ and $1$, i.e., $\frac{1}{n_{1}} E ( \bm{y}_{1} - \hat{\bm{y}}_{1 \mid 0} )^{\prime} ( \bm{y}_{1} - \hat{\bm{y}}_{1 \mid 0} )$ and $\frac{1}{n_{0}} E ( \bm{y}_{0} - \hat{\bm{y}}_{0 \mid 1} )^{\prime} ( \bm{y}_{0} - \hat{\bm{y}}_{0 \mid 1} )$.
It is therefore reasonable to select the model with the smallest sum of estimates of these prediction errors.
This process is a specific form of cross-validation when two datasets are drawn from the identical population.
In the two-dataset problem, however, we allow for the datasets to be drawn from potentially different populations.
Even when this is the case, a smaller prediction error based on the generalized random coefficients model indicates a better model.

When we know one of two datasets is more appropriate to choose a set of predictors in advance, it is possible to use a weighted sum of estimates instead of the equal sum of them.
Because of its subjectivity, we do not pursue this specification in the following application.
Furthermore, the model with the smallest maximum can be selected, though we do not take this specification in this paper.

One application of this approach is to find common predictors across datasets.
Assume first that one predictor is common across datasets in the state of nature.
We then have two specifications: one with the predictor in the model and the other without the predictor.
A smaller prediction error in the former is an indication that the predictor is common across datasets and a support for the assumption.
When several specifications share predictors of interest (or several specifications do not share them), a possible approach is to take average potential models, which is reasonable under the squared loss function (see also \citet{miyawaki-maceachern-19}).

Because the Bayesian approach is cohesive framework for estimating model parameters and model uncertainty, we use it to estimate the prediction errors and average them with the posterior model probability, which is proportional to the product of the marginal likelihood and prior model probability given a model.
When the uniform prior model probability is used, as in this paper, the result is proportional to the marginal likelihood.
In this application, because the prediction error is averaged over the posterior model probability (a Bayesian measure of in-sample fit), it balances the measure of prediction and fit.

The selection process in this context is summarized in the following five steps.
\begin{description}
\item[Step 1.] Select a set of predictors assumed to be common across datasets.
\item[Step 2.] Given the models represented by Equation \eqref{eq:RCM slide} (one for $\mathcal{S}_{0}$ and the other for $\mathcal{S}_{1}$), estimate the prediction error.
\item[Step 3.] Average the error over potential models based on their posterior model probabilities.
\item[Step 4.] Repeat the three steps above by changing the labels $0$ and $1$; the prediction error for this set of common predictors is the sum of the two errors.
\item[Step 5.] Choose the set of predictors that minimizes the prediction error as common predictors across datasets.
\end{description}

The diabetes data are used to illustrate the application of our selection of common predictors.
The response is the log of the diabetes progression measure, and the ten predictors explained in Section \ref{sec:Motivating data} are used.
The maximum marginal likelihood criterion indicates that $x_{2}$, $x_{3}$, and $x_{9}$ are common across datasets (see Section \ref{sec:Motivating data}).
\citet{miyawaki-maceachern-19} report that $x_{2}$, $x_{3}$, $x_{4}$, $x_{5}$, $x_{6}$, and $x_{9}$ give the least mean squared loss when using the same model across all observations.

We performed our selection process as described above.
The selection based on the prediction error is summarized in the left panel of Figure \ref{fig:Similarity selection with diabetes data.}.
\begin{figure}[ht]
\begin{minipage}[b]{.48\linewidth}
\centering
\includegraphics[width=7cm, clip, keepaspectratio]{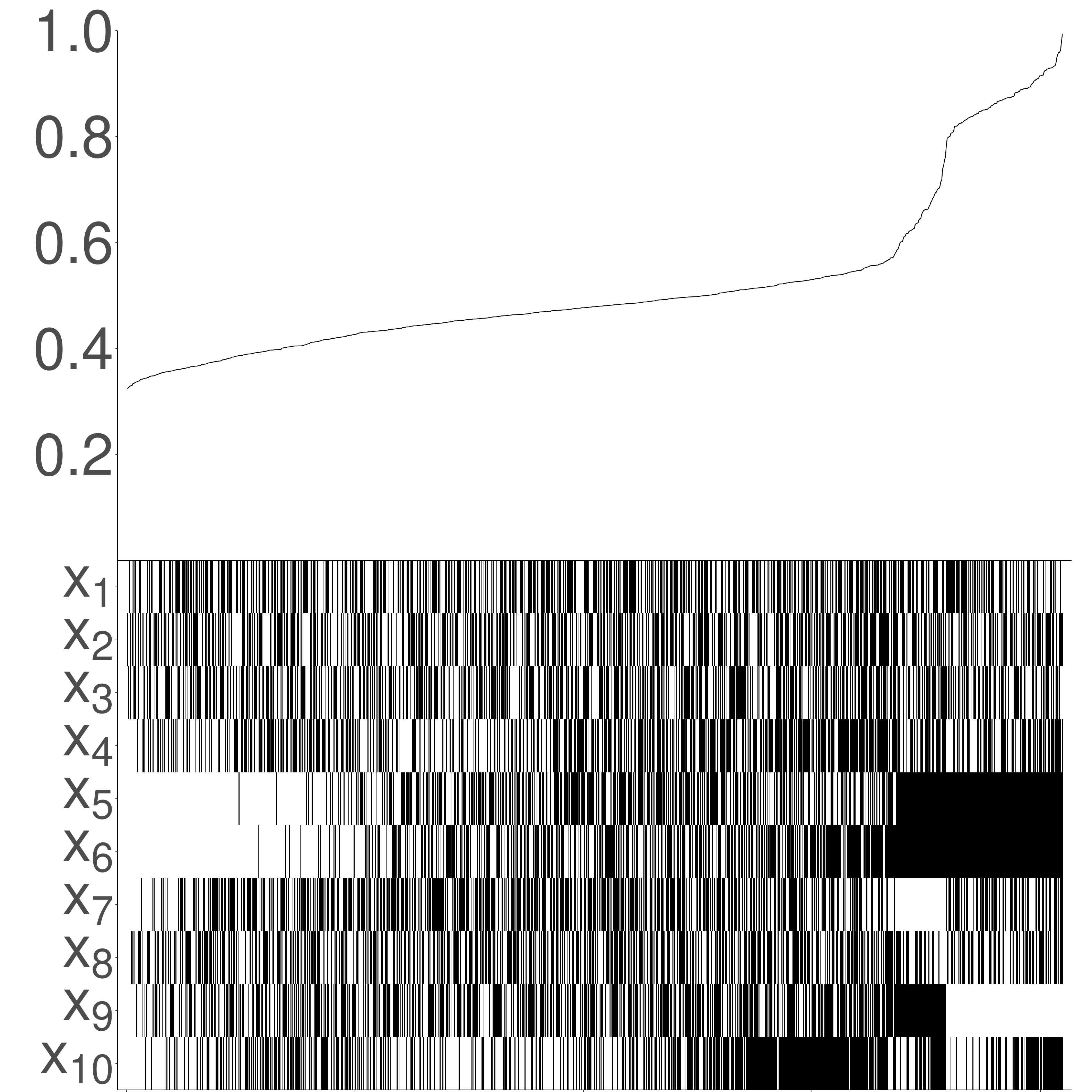}
\subcaption{All selections.}
\end{minipage}
\begin{minipage}[b]{.48\linewidth}
\centering
\includegraphics[width=7cm, clip, keepaspectratio]{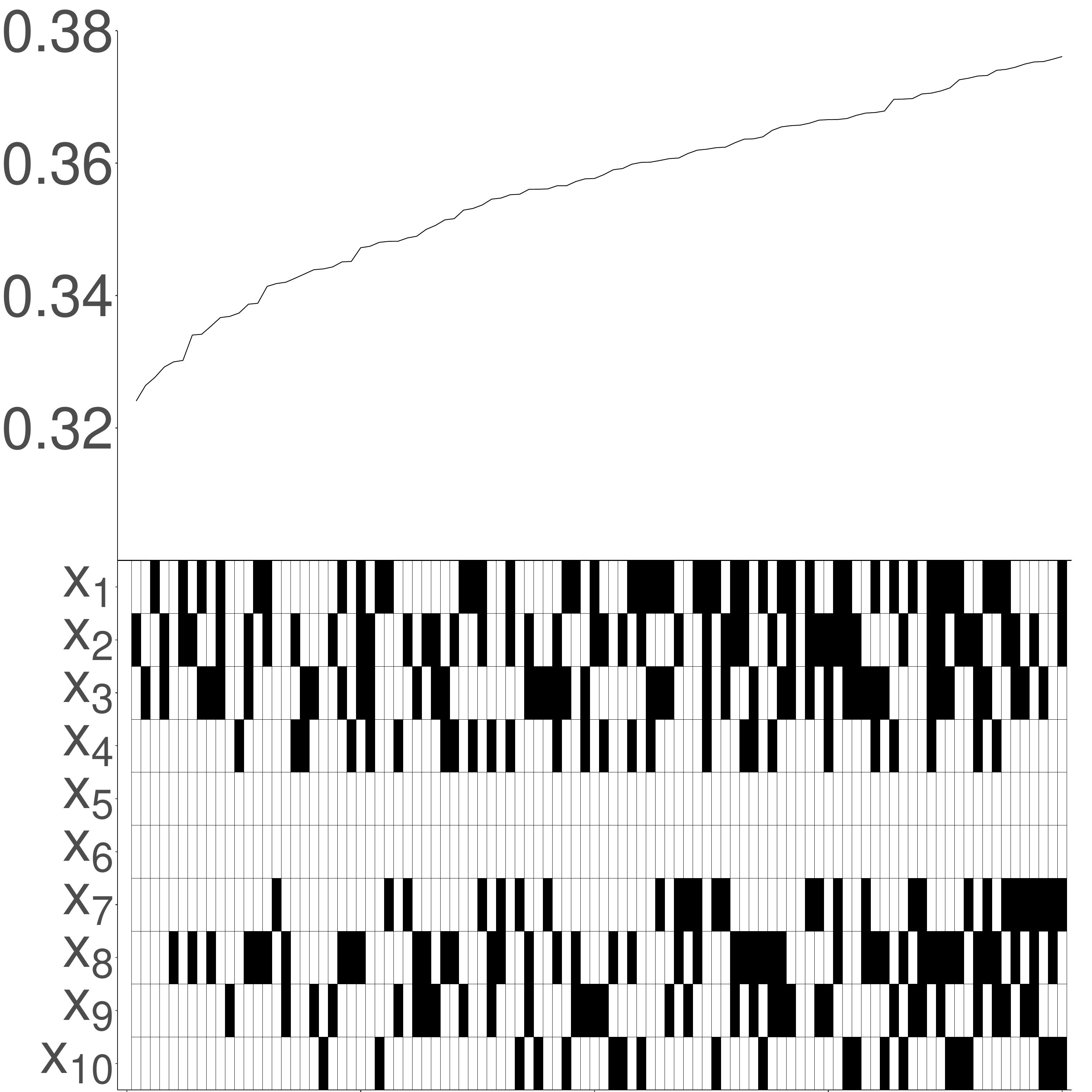}
\subcaption{Top 100 selections.}
\end{minipage}
\caption{Selection of common predictors with the diabetes data.}
\label{fig:Similarity selection with diabetes data.}
\end{figure}
The top plot draws the prediction error in ascending order, and the bottom map shows the corresponding common predictors.
When the row for a predictor---for example, $x_{1}$---is black, it is selected as a common predictor.

We note two findings observed in this figure.
First, the prediction error increases rapidly at the right side of the figure.
This increase appears to be closely related to the inclusion of $x_{5}$ and $x_{6}$ as common predictors.
Further, they are not included as common predictors in the top one hundred models, as found in the right panel of Figure \ref{fig:Similarity selection with diabetes data.}.
Because their measurement accuracy seems to be similar across datasets, this rise in prediction error attributes to the difference of their association to the response across datasets (See also Figure \ref{fig:Decomposition.}).
The smallest prediction error is achieved with only $x_{2}$ as a common predictor.

The prediction error is decomposed into the three terms in Proposition \ref{prop:generalized}, as shown in Figure \ref{fig:Decomposition.}.
\begin{figure}[ht]
\centering
\includegraphics[width=14cm, clip, keepaspectratio]{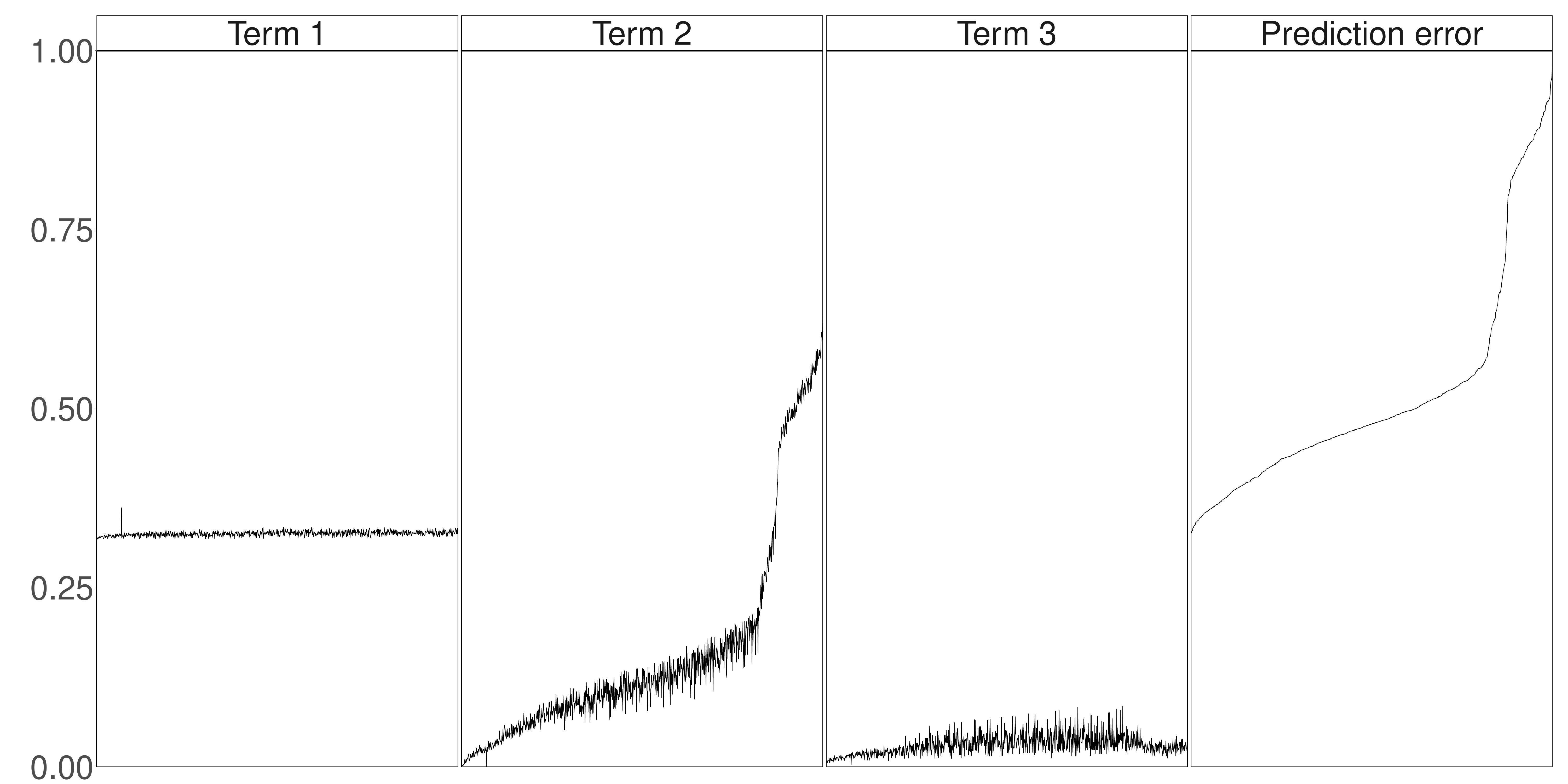}
\caption{Decomposition of the prediction error.}
\label{fig:Decomposition.}
\end{figure}
The sharp increase in the prediction error is mainly due to the second term.

Regression analysis decomposes the association between the response and predictors into two parts: the mean response and the error term.
When this association is linear and the error is distributed as normal, they reduce to coefficients and the error variance, respectively.
Two datasets are different when these two components are different.
Our three-term decomposition illustrates this view.
The second term captures the difference in coefficients, while remaining terms consider the heterogeneity in the error term variances.
Above figure shows the difference in coefficients becomes larger as the prediction error increases.
Although this result is for a specific dataset, the random coefficients structure is useful for model selection.


\section{Conclusion}
\label{sec:Conclusion}


When data are obtained from potentially different sources, the issue of whether we should pool the data to estimate the parameters of interest is an important one.
One conservative choice is to separate the data and perform estimation using different models for different datasets, at the cost of inaccurate estimates due to reduced information from the subset of data.
As we found for the geyser dataset, the difference between the two datasets can be small in terms of prediction.
However, in general, the difference can be large, so that different models are required.
The second prediction error or other variable selection method such as the one proposed by \citet{miyawaki-maceachern-19} is useful for finding common predictors to aid the model building process.
The former is demonstrated with the diabetes dataset.

A possible empirical application of our method is the determinant of growth, which is an important question in economics that has been analyzed in a number of papers (see, e.g., \citet{fernandez-etal-01} and references therein).
However, the conclusion may change when we take the two-dataset problem into account.
Economic theory typically assumes one common mechanism of growth.
This assumption may be true, but heterogeneity in the set of predictors exists.
For example, data from developed and developing counties are different in terms of accuracy.
Our approach is useful in this case.

We would like to note two future issues to address.
The first is the extension of the case, where we have more than two datasets.
In this case, the computational burden would increase by a power of more than two, making our estimation infeasible.
Improvement in the methodology and/or computation would be necessary to reduce the computational burden.

The second issue is the prior on $\sigma_{\eta}^{2}$.
For its square root, the half Cauchy distribution is used in this paper, but this specification requires the Markov chain Monte Carlo sampling.
To make our method scalable, a reasonable prior with a shorter computational time must be developed.


\appendix
\section{Proof of Proposition \ref{prop:generalized}}
\label{app:A proof of Proposition 2}


The prediction error is given by
\begin{align*}
\bm{y}_{1} - \hat{\bm{y}}_{1 \mid 0}
=
\bm{y}_{1} - \bm{Z}_{1} \hat{\bm{\alpha}}_{1} - \bm{X}_{1} \hat{\bm{\beta}}_{0}
=
\bm{M}_{1} \bm{y}_{1} - \bm{M}_{1} \bm{X}_{1} \hat{\bm{\beta}}_{0}
=
\bm{M}_{1} \left( \bm{y}_{1} - \bm{Q} \bm{y}_{0} \right),
\end{align*}
where $\bm{Q} = \bm{X}_{1} ( \tilde{\bm{X}}_{0}^{\prime} \tilde{\bm{X}}_{0} )^{-1} \tilde{\bm{X}}_{0}^{\prime}$.
Linear models for $\bm{y}_{j}$ ($j = 0, 1$) and the random coefficients structure give the right-hand side as
\begin{align*}
\bm{M}_{1} \left( \bm{\epsilon}_{1} + \bm{X}_{1} \bm{\eta} - \bm{Q} \bm{\epsilon}_{0} \right).
\end{align*}
We note that $\bm{Q} \bm{X}_{0} = \bm{X}_{1}$ and $\bm{Q} \bm{Z}_{0} = \bm{O}$, where $\bm{O}$ is the null matrix.
The independence between errors gives the expected sum of squared prediction errors as
\begin{align*}
E \left( \bm{y}_{1} - \hat{\bm{y}}_{1 \mid 0}\right)^{\prime} \left( \bm{y}_{1} - \hat{\bm{y}}_{1 \mid 0}\right)
=
\left( n_{1} - k_{1} \right) \sigma_{1}^{2}
+
\sigma_{\eta}^{2} \text{tr} \left( \bm{X}_{1}^{\prime} \bm{M}_{1} \bm{X}_{1} \right)
+
\sigma_{0}^{2} \text{tr} \left( \bm{Q}^{\prime} \bm{M}_{1} \bm{Q} \right).
\end{align*}
Because the annihilator is idempotent, the first trace term is $\text{tr} ( \tilde{\bm{X}}_{1}^{\prime} \tilde{\bm{X}}_{1} )$.
The second trace term then becomes
\begin{align*}
\text{tr} \left( \bm{Q}^{\prime} \bm{M}_{1} \bm{Q} \right)
&=
\text{tr} \left\{ \left( \tilde{\bm{X}}_{0}^{\prime} \tilde{\bm{X}}_{0} \right)^{-1} \bm{X}_{1}^{\prime} \bm{X}_{1} \right\}
-
\text{tr} \left\{ \left( \tilde{\bm{X}}_{0}^{\prime} \tilde{\bm{X}}_{0} \right)^{-1} \bm{X}_{1}^{\prime} \bm{P}_{1} \bm{X}_{1} \right\} \notag \\
&=
\text{tr} \left\{ \left( \tilde{\bm{X}}_{0}^{\prime} \tilde{\bm{X}}_{0} \right)^{-1} \tilde{\bm{X}}_{1}^{\prime} \tilde{\bm{X}}_{1} \right\},
\end{align*}
where $\bm{P}_{1} = \bm{W}_{1} \left( \bm{W}_{1}^{\prime} \bm{W}_{1} \right)^{-1} \bm{W}_{1}^{\prime}$.
Collecting terms and dividing both sides by $n_{1}$ yields the result.


\section{Full conditional distributions for the random coefficients model}
\label{app:Full conditional distributions for the random coefficients model}


The model is represented by Equation \eqref{eq:RCM slide} and the random coefficients structure.
Because we always include the intercept and assume the random coefficients structure on it,  the model specification in this context is given by
\begin{align*}
\bm{y}_{0} &= \bm{Z}_{0} \bm{\alpha}_{0} + \xi_{0} + \bm{X}_{0} \bm{\beta}_{0} + \bm{\epsilon}_{0}, \\
\bm{y}_{1} &= \bm{Z}_{1} \bm{\alpha}_{1} + \xi_{1} + \bm{X}_{1} \bm{\beta}_{1} + \bm{\epsilon}_{1}, \\
\begin{pmatrix}
\xi_{1} \\ \bm{\beta}_{1}
\end{pmatrix}
&=
\begin{pmatrix}
\xi_{0} \\ \bm{\beta}_{0}
\end{pmatrix}
+
\bm{\eta}.
\end{align*}
For brevity, we replace $\bm{W}_{1}$ with $\bm{Z}_{1}$ in this appendix.
Let
\begin{align*}
\bm{\Lambda}_{j}^{-1}
=
\begin{pmatrix}
\bm{Z}_{j} \vdots \bm{X}_{j}
\end{pmatrix}^{\prime}
\begin{pmatrix}
\bm{Z}_{j} \vdots \bm{X}_{j}
\end{pmatrix},
\quad
(j = 0, 1).
\end{align*}
The prior distributions are specified in Section \ref{sec:Estimate the prediction error}.

Then, the combination of the $g$-prior and the random coefficients structure results in the following conditional distribution,
\begin{align*}
\begin{pmatrix}
\xi_{j} \\ \bm{\alpha}_{j} \\ \bm{\beta}_{j}
\end{pmatrix}
\mid
\xi_{1-j}, \bm{\beta}_{1-j}
&\sim
N \left\{
\begin{pmatrix}
\xi_{1-j}
\\
\bm{\Sigma}_{j}
\begin{pmatrix}
\bm{0} \\ \sigma_{\eta}^{-2} \bm{\beta}_{1-j}
\end{pmatrix}
\end{pmatrix},
\begin{pmatrix}
\sigma_{\eta}^{2} & \bm{0}^{\prime} \\ \bm{0} & \bm{\Sigma}_{j}
\end{pmatrix}
\right\},
\intertext{where $\bm{0}$ is a vector of zeros, $N (\bm{\mu}, \bm{\Sigma})$ denotes the multivariate normal distribution with mean vector $\bm{\mu}$ and covariance matrix $\bm{\Sigma}$, $g_{j}$ is the hyperparameter in the $g$-prior, and}
\bm{\Sigma}_{j}^{-1} &= g_{j}^{-1} \sigma_{j}^{-2} \bm{\Lambda}_{j}^{-1}
+
\begin{pmatrix}
\bm{O} & \bm{O} \\ \bm{O} & \sigma_{\eta}^{-2} \bm{I}
\end{pmatrix},
\end{align*}
for $j = 0, 1$.

\noindent\textbf{The full conditional for $\xi_{j}, \bm{\alpha}_{j}, \bm{\beta}_{j}$ ($j = 0, 1$).}

It is the multivariate normal distribution with mean $\bm{m}_{j}$ and covariance matrix $\bm{S}_{j}$; that is,
\begin{align*}
\begin{pmatrix}
\xi_{j} \\ \bm{\alpha}_{j} \\ \bm{\beta}_{j}
\end{pmatrix}
\mid
\xi_{1-j}, \bm{\beta}_{1-j}
&\sim
N \left( \bm{m}_{j}, \bm{S}_{j} \right),
\intertext{where}
\bm{S}_{j}^{-1}
&=
\begin{pmatrix}
\sigma_{\eta}^{-2} + \sigma_{j}^{-2} n_{j} & \bm{0}^{\prime} \\ \bm{0} & \tilde{\bm{\Sigma}}_{j}^{-1}
\end{pmatrix}, \\
\bm{m}_{j}
&=
\begin{pmatrix}
\left( \sigma_{\eta}^{-2} + \sigma_{j}^{-2} n_{j} \right)^{-1} \left( \sigma_{\eta}^{-2} \xi_{1-j} + \sigma_{j}^{-2} n_{j} \bar{y}_{j} \right) \\
\tilde{\bm{\Sigma}}_{j}
\begin{pmatrix}
\sigma_{j}^{-2} \bm{Z}_{j}^{\prime} \bm{y}_{j} \\ \sigma_{\eta}^{-2} \bm{\beta}_{1-j} + \sigma_{j}^{-2} \bm{X}_{j}^{\prime} \bm{y}_{j}
\end{pmatrix}
\end{pmatrix}, \\
\tilde{\bm{\Sigma}}_{j}^{-1}
&=
\left( 1 + g_{j}^{-1} \right) \sigma_{j}^{-2} \bm{\Lambda}_{j}^{-1}
+
\begin{pmatrix}
\bm{O} & \bm{O} \\ \bm{O} & \sigma_{\eta}^{-2} \bm{I}
\end{pmatrix}.
\end{align*}

\noindent\textbf{The full conditional for $\sigma_{j}^{2}$ ($j = 0, 1$).}

It is the inverse gamma distribution; that is,
\begin{align*}
\sigma_{j}^{2} \mid \xi_{j}, \bm{\alpha}_{j}, \bm{\beta}_{j}
\sim
IG \left( \frac{n_{j}}{2}, \frac{\bm{e}_{j}^{\prime} \bm{e}_{j}}{2} \right),
\end{align*}

where $IG (a, b)$ is the inverse gamma distribution with shape parameter $a$ and scale parameter $b$ and $\bm{e}_{j} = \bm{y}_{j} - \bm{Z}_{j} \bm{\alpha}_{j} - \xi_{j} - \bm{X}_{j} \bm{\beta}_{j}$.

\noindent\textbf{The full conditional for $g_{j}$ ($j = 0, 1$).}

It is a nonstandard distribution with a conditional density function proportional to
\begin{align*}
g_{j}^{-k/2} \left( 1 + g_{j} \right)^{-a/2} \exp \left( -\frac{ \bm{\beta}_{j}^{\prime} \bm{X}_{j}^{\prime} \bm{X}_{j} \bm{\beta}_{j} }{2 \sigma_{j}^{2}} \right),
\end{align*}
where $k$ is the number of predictors in $\bm{X}_{j}$ and $a$ is the parameter in the prior distribution on $g_{j}$, which is equal to $3$ in our applications.
The Metropolis-Hasting algorithm with the inverse gamma proposal is used to draw a sample from this density.
The proposal distribution is specified as
\begin{align*}
IG \left( \frac{k}{2}, \frac{ \bm{\beta}_{j}^{\prime} \bm{X}_{j}^{\prime} \bm{X}_{j} \bm{\beta}_{j} }{2 \sigma_{j}^{2}} \right).
\end{align*}
%

\noindent\textbf{The full conditionals for $\sigma_{\eta}^{2}$ and the auxiliary variable.}

As shown in \citet{wand-etal-11} and \citet{makalic-schmidt-16}, the half standard Cauchy distribution can be represented as a mixture of two inverse gamma distributions: when $U \sim IG ( 1/2, 1 )$ and $\sigma_{\eta}^{2} \mid U \sim IG ( 1/2, U^{-1} )$, $\sigma_{\eta}$ follows the half standard Cauchy distribution.

By means of this fact, the generation of $\sigma_{\eta}^{2}$ is augmented by the auxiliary variable $U$.
Their full conditionals are $IG ( \frac{k+2}{2}, \frac{S_{\eta}}{2} + U^{-1} )$, where $S_{\eta} = ( \xi_{1} - \xi_{0} )^{2} + ( \bm{\beta}_{1} - \bm{\beta}_{0} )^{\prime} ( \bm{\beta}_{1} - \bm{\beta}_{0} )$ for $\sigma_{\eta}^{2}$ and $IG ( 1, 1 + \sigma_{\eta}^{-2} )$ for $U$.


\section*{Acknowledgements}

One of the authors acknowledges that this paper is supported by the scholarship from the Kwansei Gakuin University.


\bibliographystyle{./chicago}
\bibliography{./two_bib}


\end{document}